\newtheorem{theorem}{Theorem}
\newtheorem{lemma}[theorem]{Lemma}
\newtheorem{proposition}[theorem]{Proposition}
\newtheorem{remark}[theorem]{Remark }
\newtheorem{definition}[theorem]{Definition}
\newtheorem{example}[theorem]{Example}
\definecolor{verde}{rgb}{0,0.7,0}
\newcommand{\be}{\begin{equation}}
\newcommand{\ee}{\end{equation}}
\newcommand{\bea}{\begin{eqnarray}}
\newcommand{\eea}{\end{eqnarray}}
\newcommand{\bean}{\begin{eqnarray*}}
\newcommand{\eean}{\end{eqnarray*}}
\journal{European Journal of Control} 
\begin{document}

\begin{frontmatter}

\title{A Bandwagon Bias Based Model for Opinion Dynamics: \\
 Intertwining between Homophily and Influence Mechanisms }

 \author{Giulia De Pasquale and Maria Elena Valcher\\
Dept. of Information Engineering, University of Padova\\
via Gradenigo 6B, 35131 Padova, Italy, \texttt{giulia.depasquale@phd.unipd.it, meme@dei.unipd.it}}

\begin{abstract}                          
Recently a model for the interplay between homophily-based appraisal dynamics and influence-based opinion dynamics has been proposed.
The model explores for the first time how the opinions of a group of agents on a certain number of issues/topics is influenced by the agents' mutual appraisal and, conversely, the agents' mutual appraisal is updated based on the agents' opinions on the various issues, according to a homophily model.
In this paper we show that a simplified (and, in some situations, more feasible) 
version of the model, that accounts only for the signs of the agents' appraisals rather than for their numerical values, provides an equally accurate and effective model of the opinion dynamics  in small networks. The  equilibria reached by this model 
 correspond,  almost surely,  to situations in which the agents' network is complete and structurally balanced. On the other hand, 
we ensure that such equlibria can always be reached in a finite number of steps, and,  differently from the original model, we rule out other types of  equilibria that correspond to disconnected social networks. 
\end{abstract}

\begin{keyword}
Opinion dynamics, homophily model, equilibria, structural balance.
\end{keyword}
\end{frontmatter}
\section{Introduction} 
 
Over the last few decades, the modelling and analysis of sociological phenomena have attracted the interests of researchers from various fields, such as sociology, economics, and mathematics  \cite{Altafini2013,friedkin,sociological_model,hiller}. 
In several cases,   sociological models represent the primary focus of the investigation, but there are numerous contexts, such as product promotion, spread of diseases, resource allocation, etc.,  where  social dynamics   represents the context  in  which other
phenomena evolve.  Consequently, understanding its   behaviour 
 represents a preliminary but   fundamental step   in order to investigate and understand
 the evolution of the process of interest  \cite{Aghbolagh19,Marvel11}.
 As a result, it becomes of great importance to build a reliable model for the social dynamics, that allows to forecast the network evolution and thus  to design strategies aimed at driving the network towards the desired  configuration  \cite{ProskurnikovTempo}. 
 Dynamic social balance theory    is concerned with the study and analysis of the evolution of socially unbalanced networks towards socially balanced ones, namely   networks in  balanced  configurations in which all the agents split in (at most) two groups in   such a way   that all the agents in the same group have friendly relationships, while agents from different groups    have not \cite{Harary,Heider}. 
 
Even if, from a    modeling perspective, 
 the study of social balance has rather remote origins, as  witnessed by the pioneering works of Heider \cite{Heider}, Cartwright and Harary \cite{FH:59,Harary}, DeGroot \cite{Degroot}, the dynamic social balance theory   represents an active and timely research topic. In this regard we mention the recent works of Mei et al. \cite{HomophilyMei} in which two dynamical models based on two different social mechanisms, the homophily mechanism and the influence mechanism, are proposed. In  the homophily mechanism,  individuals update their mutual appraisals based on their appraisals   of the other  group members. In the influence mechanism, instead, each agent attributes an influence to the other network members, based on the appraisal that the agent has about them.  Reference \cite{HomophilyMei}  shows that both   mechanisms drive the network towards   social balance,   but the homophily mechanism gives a more general explanation for the emergence of the social balance with respect to the influence one. 
 \\
   A relevant contribution to the dynamic social balance analysis is the one from Quattrocchi et al. \cite{quattrocchi} whose model takes into account the presence of media and gossip as separate mechanisms. 
   Another inspiring work in which a sociological mathematical model, including two coexisting social mechanisms, is studied, is the recent work from Liu et al. \cite{MeiDorfler}. 
 
 In \cite{MeiDorfler} a novel model in which the interpersonal appraisals and the individual opinions evolve according to an intertwined dynamics has been proposed for the first time. In the proposed state space model, the   authors assume as state variables both the interpersonal appraisals   and the agent's opinions, namely  the opinions of   the agents on a specific selection of topics. Specifically, the   model  relies on the assumption that the opinion that an agent has on a particular issue is the   (signed) weighted average of the opinions that all the other agents have on that issue,   where the weights are the appraisals that the agent has about each of them. At the same time, the interpersonal relationship of an agent pair depends on the comparison between the opinions that the two agents have about all the topics into play, thus following a homophily mechanism.
 This model  
can be interpreted as a mathematical formalization  of a form of cognitive bias known in psychology as ``bandwagon bias" \cite{bandwagon}, by this meaning 
that our opinions on topics and issues are influenced by the opinions that other individuals have on the same topics and by the relationships we have
with those individuals.  
Bandwagon bias results in an intertwined dynamics involving both a homophily mechanisms for the interpersonal relationships and an influence mechanism for the agents' opinions. 
  This is in line with the fact that, in real life, interpersonal appraisals influence individual opinions and viceversa. 
Another dynamical model that studies how cognitive bias drives the formation of social influences can be found in \cite{BulloML}.


Inspired by   \cite{MeiDorfler} we propose here a  mixed-binary  and real-valued version of the aforementioned model, 
  by this meaning that while we assume that the agent's opinions take (positive or negative) real values, that represent their  levels of  appreciation or 
dislike of each specific issue, we do not quantify the level of mutual appraisal, but only take into account whether the mutual relationships between pairs of agents are friendly or hostile.
 This  simplified assumption has been   adopted in our previous work \cite{bin_hom}, as well as in  the works of Cisneros-Velarde et al. \cite{Bullo2020} and of Mei et al. \cite{not_all_to_all}   in which      mutual appraisals are treated as binary variables. 
  In particular, in \cite{not_all_to_all} the evolution of a signed unweighted non-all-to-all network towards the straightforward generalization of the concept of structural balance has been proposed, thus leading to the graph-theoretical concept of  ``triad-wise structural balance", where each agent's ego-networks satisfies the structural balance property. 
  This discrete-time ``gossip-like" model enjoys the property of convergence towards a non-all-to-all structural balance configuration, while    the structure of the associated   graph is time invariant, since only the signs of the weights change.   In \cite{Bullo2020} a network formation game, in which pairs of rational individuals strategically change the signs of the edges in a complete network is proposed. The game is shown to strategically reduce the cognitive dissonance in the network along time, by driving the network towards clustering balance \cite{Davis}. 
The motivation  behind the study of the dynamical evolution of unweighted signed social networks, that units the aforementioned works, comes from the fact that there are many circumstances in which recognizing the type  (friendly or hostile) of relationship between individuals is    easy, while  assessing its intensity 
is      complicated and prone to model errors.
In fact, while  individual evaluations of  certain products or their opinions on  certain topics can be easily obtained,  attributing  numerical values to the 
mutual appraisals   is  more challenging and oftentimes individuals prefer to not even reveal them.  \\
 We show that our simpler model retains all the good properties of the model proposed in \cite{MeiDorfler}  both in terms of transient behaviour and convergence to structurally balanced equilibria, meanwhile strengthening some of the results   derived for that model. In particular, 
  our model exhibit only two types of   long term behavior: 
either   the social network converges in finite time towards a socially balanced   all-to-all equilibrium  or asymptotically converges to zero. Other  equilibrium  structures, that arise for the model in \cite{MeiDorfler} and that
correspond to the case when the group of agents splits into disconnected structurally balanced subnetworks, are ruled out by our model assumptions,  which are designed for small networks.   In such contexts, getting an all-to-all equilibrium network is    realistic and, as it will be clear from simulations,   the situation when a structurally balanced equilibrium cannot be found and all individuals eventually weaken their opinions and appraisal to avoid long term conflicts (see \cite{Altafini2013}) is a very rare occurrence.
 
The paper is organized as follows: in Section \ref{model} the model is introduced and its equilibrium conditions are studied, Section \ref{convergence_FT} deals with the finite time behaviour of the model, while in Section \ref{convergence_inf} its asymptotic convergence properties are studied. Section \ref{conclusion} concludes the paper.
\medskip

{\bf Notation}.\
Given   $k, n\in \mathbb{Z}$, with $k <n$,   the symbol   $[k,n]$   denotes the  integer set  $\{k, k+1, \dots, n\}$.
We let   ${\bf e}_i$ denote the $i$-th vector of the canonical basis of $\mathbb{R}^n$,  where the dimension $n$ will be   clear from the context. 
  The vectors ${\bf 1}_n$ and ${\bf 0}_n$ denote the $n$-dimensional vectors whose entries are all $1$ or $0$, respectively.

The function ${\rm sgn}\!\!:\mathbb{R}^{n \times m}\rightarrow \{-1,0,1\}^{n \times m}$ is the   function that maps a real matrix into a matrix taking values in $\{-1,0,1\},$ in accordance with the sign of its entries.

In the sequel, the $(i,j)$-th entry of a matrix  ${\bf X}$ is denoted  either by ${ X}_{ij}$
or  by $[{\bf X}]_{ij}$, 
while the $i$-th entry of a vector ${\bf v}$ either by $v_i$ or by $[{\bf v}]_i$. 
 The notation  ${\bf X}= {\rm diag} \{x_1, x_2,  \dots,  x_N\}$ indicates the diagonal matrix whose diagonal entries are $x_1, x_2, \dots, x_N$. 
Given a matrix ${\bf X}\in {\mathbb R}^{N\times N}$, the {\em spectrum} of ${\bf X}$, $\sigma({\bf X})$, is the set of eigenvalues of ${\bf X}$.\\
An {\em undirected and signed  graph} is a triple~\cite{Mohar} $\mathcal{G}=(\mathcal{V},\mathcal{E},{\mathcal A})$, where $\mathcal{V}=\{1,\dots,N\}=[1,N]$ is the set of vertices, $\mathcal{E}\subseteq\mathcal{V}\times\mathcal{V}$  the set of arcs (edges), and
${\mathcal A}\in\{-1,0,1\}^{N\times N}$  the  {\em adjacency matrix} of the  graph $\mathcal{G}$. 
An arc $(j,i)$  belongs to ${\mathcal E}$ if and only if ${\mathcal A}_{ij} 
\ne 0$ and when so it may have either weight $1$ or weight $-1$.
As the graph is undirected,
 $(i,j)$ belongs to ${\mathcal E}$ if and only if $(j,i)\in {\mathcal E}$, and   
   they have the same weight (equivalently ${\mathcal A}$ is a symmetric matrix).
 A sequence 
 ${j_1}
\leftrightarrow {j_2} \leftrightarrow  {j_3}  \leftrightarrow \dots  \leftrightarrow {j_{k}} \leftrightarrow {j_{k+1}}$
is a {\em path}  
 of length $k$ 
 connecting ${j_1}$ and ${j_{k+1}}$
provided that
$({j_1},{j_2}), ({j_2},{j_3}),\dots,$ $({j_{k}}, {j_{k+1}}) \in {\mathcal E}$. A closed path  in which each node, except the start-end node, is distinct is called {\em cycle}, and a cycle of unitary length is also known as {\em self-loop}.
Since the adjacency matrix uniquely identifies the graph, in the following we will oftentimes use the notation   ${\mathcal G}({\mathcal A})$ to denote the graph having ${\mathcal A}\in\{-1,0,1\}^{N\times N}$ as adjacency matrix. \\
 The graph $\mathcal{G}$ is said to be {\em complete} if, for every pair of vertices $(i,j)$,  $i,j \in \mathcal{V}$, there is an edge connecting them, namely $(i,j) \in \mathcal{E}$.   If so, ${\mathcal E}= {\mathcal V}\times {\mathcal V}$ and ${\mathcal A}\in~\{-1,1\}^{N\times N}$. 
Given three distinct vertices $i,j$ and $k\in {\mathcal V}$,    
 the {\em triad} $(i,j,k)$  is said to be  {\em balanced} \cite{Bullo2020} if 
${\mathcal A}_{ij} {\mathcal A}_{jk} {\mathcal A}_{ki}=1$ and {\em unbalanced} if ${\mathcal A}_{ij} {\mathcal A}_{jk} {\mathcal A}_{ki}=-1$.

  In this work we    consider undirected 
  and signed graphs with unitary self loops.   Therefore the adjacency matrix of the graph  belongs to the set \cite{bin_hom}
\be
\mathcal{S}_{1}^N := \{ {\mathbf M} \in \{-1,0,1\}^{N \times N} : {\mathbf M}= {\mathbf M}^\top, \  M_{ii} =1, \  \forall i\in  [1,N] \}.
\ee
A graph ${\mathcal G}$ is said to be \textit{structurally balanced} if it   can be partitioned into two factions of vertices such that   edges between vertices of the same faction have nonnegative weights, and edges between vertices from different factions have nonpositive weights (see, also, the aforementioned concept of ``triad-wise structural balance" in \cite{not_all_to_all}). 
\\
The following result easily follows from Lemma 2.2 in  \cite{HomophilyMei}.

\begin{lemma}[Structural balance for complete graphs]\label{StructBal}
Given a matrix   ${\bf X}\in {\mathcal S}_1^N \cap \{-1,1\}^{N\times N}$, 
the following facts are equivalent: 
\begin{itemize}
\item[i)]  ${\bf X}={\bf p}{\bf p}^\top,$ for some vector ${\bf p}\in \{-1,1\}^N$;
\item[ii)] rank$({\bf X})=1$;
\item[ii)] for every $a,b\in [1,N]$ either  ${\bf e}_a^\top {\bf X} =  {\bf e}_b^\top {\bf X}$ or
${\bf e}_a^\top {\bf X} =  - {\bf e}_b^\top {\bf X}$ ;
\item[iv)] the graph ${\mathcal G}({\bf X})$ is structurally balanced;
\item[v)] all the triads $(i,j,k)$ of distinct vertices in ${\mathcal G}({\bf X})$ are balanced.
\end{itemize}
\end{lemma}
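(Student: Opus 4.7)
The plan is to establish the five equivalences by splitting them into three groups connected by short arguments: the first three statements are purely linear-algebraic and can be cycled through as (i)$\Rightarrow$(ii)$\Rightarrow$(iii)$\Rightarrow$(i); the link (i)$\Leftrightarrow$(iv) is a direct translation between the sign vector $\mathbf{p}$ and the bipartition witnessing structural balance; and (iv)$\Leftrightarrow$(v) is the classical Harary-type characterization, which is exactly the complete-graph instance of Lemma 2.2 in \cite{HomophilyMei}.

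For (i)$\Rightarrow$(ii) there is nothing to do, as an outer product $\mathbf{p}\mathbf{p}^\top$ trivially has rank one. For (ii)$\Rightarrow$(iii) I would note that every row of $\mathbf{X}$ is nonzero (since the diagonal entry is $+1$), and that rank one forces any two rows to be proportional with some constant $\lambda = u_a/u_b$; since both rows lie in $\{-1,1\}^N$, evaluating the proportionality at any entry shows $\lambda\in\{-1,1\}$, which is exactly (iii). For (iii)$\Rightarrow$(i), fix the first row as reference and define $p_b\in\{-1,1\}$ by $\mathbf{e}_b^\top\mathbf{X} = p_b\,\mathbf{e}_1^\top\mathbf{X}$, with $p_1=1$. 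Setting $\mathbf{q}^\top := \mathbf{e}_1^\top\mathbf{X}$ yields $\mathbf{X}=\mathbf{p}\mathbf{q}^\top$; the symmetry $\mathbf{X}=\mathbf{X}^\top$ then forces $\mathbf{q}=\lambda\mathbf{p}$ for some scalar $\lambda$, and $X_{11}=1$ gives $\lambda=1$, hence $\mathbf{X}=\mathbf{p}\mathbf{p}^\top$.

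For (i)$\Rightarrow$(iv), partition $\mathcal{V}$ according to the sign of $p_i$: within each class $X_{ij}=p_ip_j=+1$, across classes $X_{ij}=-1$, so $\mathcal{G}(\mathbf{X})$ is structurally balanced. Conversely, given factions $\mathcal{V}_+,\mathcal{V}_-$ as in (iv), set $p_i=+1$ on $\mathcal{V}_+$ and $p_i=-1$ on $\mathcal{V}_-$; completeness (implied by $\mathbf{X}\in\{-1,1\}^{N\times N}$) together with the sign rules of structural balance forces $X_{ij}=p_ip_j$ for $i\neq j$, while $X_{ii}=1=p_i^2$. The equivalence (iv)$\Leftrightarrow$(v) is then a direct invocation of Lemma~2.2 in \cite{HomophilyMei}, which asserts that in a complete signed graph, structural balance is equivalent to the balance of every triangle.

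The main obstacle, to the extent there is one, lies in the step (iii)$\Rightarrow$(i): the row-wise proportionality alone produces a rank-one decomposition $\mathbf{X}=\mathbf{p}\mathbf{q}^\top$ but not the desired symmetric form, and both the assumption $\mathbf{X}=\mathbf{X}^\top$ and the normalization $X_{ii}=1$ must be used to collapse $\mathbf{q}$ onto $\mathbf{p}$. Every other implication is either an immediate linear-algebra observation or a restatement of the classical Cartwright--Harary-type result already encoded in Lemma~2.2 of \cite{HomophilyMei}, which is why the statement can legitimately be introduced as a quick corollary.
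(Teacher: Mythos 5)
Your proof is correct. The paper itself offers no argument for this lemma beyond the remark that it ``easily follows from Lemma 2.2 in \cite{HomophilyMei}'', and your write-up is exactly the routine elaboration that remark presupposes: the linear-algebra cycle (i)$\Rightarrow$(ii)$\Rightarrow$(iii)$\Rightarrow$(i) (with the symmetry plus $X_{ii}=1$ normalization correctly used to collapse $\mathbf{q}$ onto $\mathbf{p}$), the sign-vector/bipartition dictionary for (i)$\Leftrightarrow$(iv), and the cited Cartwright--Harary triangle criterion for (iv)$\Leftrightarrow$(v).
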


 In the following we will say that ${\mathbf X}$ is structurally balanced if ${\mathcal G}({\bf X})$ is structurally balanced.

\section{The model: properties, equilibrium points and periodic solutions} \label{model}

  Given a group  of $N$ agents, we denote by  ${\mathbf X}(t) \in \{-1,0,1\}^{N \times N}$   the   \textit{appraisal matrix at   time  $t$} of the agents, whose $(i,j)$-th entry represents  agent $i$'s appraisal of agent $j$ at time $t$.
${X}_{ij}(t) =1$ if $i$ has  positive feelings towards $j$ and  ${X}_{ij}(t)=-1$ if $i$ has negative feelings towards $j$, while  ${X}_{ij}(t)=0$ if   $i$    chooses not rely on $j$ in forming its opinion\footnote{Since we consider small-medium size networks, this formalizes the case when agent $i$ knows agent $j$ but does not find correlation between its own choices and agent $j$'s opinions, and hence chooses not to give it   any weight.}. 
We assume that for each pair of agents $(i,j)$ at each time instant $t$ the  appraisal is mutual, namely  ${X}_{ij}(t)= { X}_{ji}(t)$ $\forall i,j\in [1,N]$, and hence ${\mathbf X}(t)$  is a symmetric matrix $\forall t \geq 0$. 
  The (undirected and signed) graph ${\mathcal G}({\bf X})$, having  ${\mathbf X}$  as   adjacency matrix, represents the  {\em appraisal network} \cite{HomophilyMei}.\\
We assume that the agents express their opinions about a certain number, say $m$, of issues. This information is collected in a matrix ${\mathbf Y}(t)\in {\mathbb R}^{N\times m}$, whose $(i,j)$-th entry is the opinion that agent $i$ has about the issue $j$ at the time instant $t$. ${\mathbf Y}(t)$ is  called the \textit{opinion matrix at the time instant t} of the social network. We assume that the opinion matrix and the appraisal matrix evolve according to an intertwined dynamics expressed by the following equations
\begin{align}
{\mathbf X}(t+1) &= {\rm sgn}({\mathbf Y}(t){\mathbf Y}(t)^\top)  \label{appraisal_dy}\\
{\mathbf Y}(t+1) &= \frac{1}{N} {\mathbf X}(t+1){\mathbf Y}(t) \label{opinion_dy} 
\end{align}
  that component-wise correspond to
\begin{align}
{ X}_{ij}(t+1) &= {\rm sgn}\left(\sum_{k=1}^m{ Y}_{ik}(t){ Y}_{jk}(t)\right)  \label{appraisal_comp}\\
{Y}_{ij}(t+1) &=  \frac{1}{N} \sum_{k=1}^N{ X}_{ik}(t+1){ Y}_{kj}(t). \label{opinion_comp}
\end{align}
Equation \eqref{opinion_comp} shows that the opinion that agent $i$ has about issue $j$ at the time instant $t+1$ is a  (signed) weighted average of the opinions that all agents have about the topic $j$ at the time instant $t$,  where the weights are the appraisals that agent $i$ has about them at the time instant $t$, divided by the number of agents.

On the other hand, from equation \eqref{appraisal_comp}, we notice that the the $(i,j)$-th entry of the appraisal matrix at the time instant $t+1$, namely, the appraisal that agent $i$ has about agent $j$ at the time instant $t+1$, depends on the comparison between the opinions that agents $i$ and $j$ have about all the topics at the time instant $t$.  In particular, if the agents agree (resp. disagree) on a specific issue $k$, this will give a positive (resp. negative) contribution $Y_{ik}(t)Y_{jk}(t)>0$ (resp. $Y_{ik}(t)Y_{jk}(t)<0$), in determining the relationship between $i$ and $j$ at the time instant $t+1$. 

Essentially, this model captures the evolution of opinion-dependent time-varying graph structures. In this regard one can see analogies with the pioneering work form Hagselmann-Krause \cite{boundedconf}, in which the closeness of opinions determines the structure topology of the (unweighted) interaction  graph. On the other hand, in our model all agents potentially communicate and their opinions will rather determine the type (friendly/antagonistic) of relationship.
   Equations   \eqref{appraisal_dy} and  \eqref{opinion_dy}  can be grouped into a single equation that describes the update of the opinion matrix alone and takes the form
\begin{equation} \label{our_model}
{\mathbf Y}(t+1) = \frac{1}{N} {\rm sgn}({\mathbf Y}(t){\mathbf Y}(t)^\top) {\mathbf Y}(t).
\end{equation}
{\color{black} Equation \eqref{our_model} shows that  the mathematical abstraction of the bandwagon bias leads the intertwining between opinion dynamics and appraisal dynamics to a peculiar form of opinion dynamics model. 
}
It is immediate to notice that if ${\bf Y}(0)$ has a zero row (a situation that formalizes the case when one of the agents 
expresses no opinion   on any of the $m$ topics), then that same row remains zero in every subsequent opinion matrix ${\bf Y}(t), t\ge 0.$ 
Similarly, if ${\bf Y}(0)$ has a zero column (none of the agents expresses any judgement on a specific topic), that same column 
remains zero in all the matrices ${\bf Y}(t), t\ge 0.$  Therefore both cases are  of no interest (substantially, one can always remove the agent and/or the topic and focus on the analysis of the remaining variables) and will not be considered in the following. 
\medskip

 \begin{remark} \label{comparison} 
Compared with the model proposed and investigated in  \cite{MeiDorfler}, we have modified the law that governs the appraisal matrix update 
and how it affects the opinion dynamics in two aspects. First, we have chosen to keep into account only the signs of the mutual appraisals, rather than their absolute values. This is motivated by the fact that, in a lot of practical situations, being able to  assess the sign of the mutual appraisal is easier and more robust to modeling errors with respect to determining the numerical value associated to the tie strength. 
  Moreover, the influence  that agent $j$  can have on the opinion   agent $i$ has on a certain issue does not necessarily scale with the absolute value of ${ X}_{ij}$. Secondly, we have chosen to ``give a weight" also to the fact that a pair of agents chooses not  rely on each other's opinion, namely to the fact that ${ X}_{ij}=0$.
Since we consider small-medium size networks, this formalizes the case when agent $i$ knows agent $j$ but 
does not find correlation between its own choices and agent $j$'s opinions, and hence chooses not to give  it any weight.
In this perspective,  the fact that the mutual appraisal is $0$ is an information that should be considered and this motivates the fact that 
in the opinion dynamics update equation  \eqref{opinion_comp} each row is divided by   the overall number of agents $N$, rather than by the absolute value of its entries.
It is worth noticing that, however, since the appraisal matrix is obtained by comparing the (real valued) opinions of the agents on the various topics into play, 
and its $(i,j)$-th entry is zero only if the opinion vectors of agents $i$ and $j$ are orthogonal, a zero entry in the appraisal matrix is a very rare occurrence, as it will be confirmed by the numerical simulations at the end of the paper.\\
As we will see in the following, our   model retains all the relevant features of the model investigated in \cite{MeiDorfler}, and  it is simpler to analyse and implement.
\end{remark}

{\bf Assumption 1}(No zero rows/columns). \ {\em In the following, we will steadily assume that ${\bf Y}(0)$ is devoid of zero rows/columns. }\medskip


\begin{lemma}[No zero rows dynamics]\label{nozerorows} If ${\bf Y}(0)\in {\mathbb R}^{N\times m}$ has no zero rows, then for every $t\ge 0$ the matrix ${\bf Y}(t)$, obtained from the model 
\eqref{our_model} corresponding to the initial condition ${\bf Y}(0)$, has no zero rows.
\end{lemma}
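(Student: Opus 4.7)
The natural approach is induction on $t$, with the inductive step being: if $\mathbf{Y}(t)$ has no zero rows, then neither does $\mathbf{Y}(t+1)$. So I would fix $t\ge 0$, denote the rows of $\mathbf{Y}(t)$ by $\mathbf{y}_1,\dots,\mathbf{y}_N \in \mathbb{R}^{1\times m}$, and assume each $\mathbf{y}_i\ne \mathbf{0}_m^\top$. Then using \eqref{opinion_comp}, row $i$ of $\mathbf{Y}(t+1)$ is
\begin{equation*}
[\mathbf{Y}(t+1)]_{i,:} \;=\; \frac{1}{N}\sum_{k=1}^N \mathrm{sgn}(\mathbf{y}_i \mathbf{y}_k^\top)\,\mathbf{y}_k.
\end{equation*}

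The crucial observation is that testing non-zeroness by taking an inner product with $\mathbf{y}_i^\top$ turns the signs into absolute values. Explicitly,
\begin{equation*}
[\mathbf{Y}(t+1)]_{i,:}\,\mathbf{y}_i^\top \;=\; \frac{1}{N}\sum_{k=1}^N \mathrm{sgn}(\mathbf{y}_i \mathbf{y}_k^\top)\,(\mathbf{y}_k \mathbf{y}_i^\top) \;=\; \frac{1}{N}\sum_{k=1}^N \bigl|\mathbf{y}_i\mathbf{y}_k^\top\bigr|,
\end{equation*}
which is a sum of non-negative reals. The $k=i$ term equals $\|\mathbf{y}_i\|^2>0$ by the inductive hypothesis, so the whole sum is strictly positive. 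In particular $[\mathbf{Y}(t+1)]_{i,:}\ne \mathbf{0}_m^\top$, concluding the inductive step. The base case $t=0$ is exactly Assumption~1.

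There is no real obstacle here; the only subtlety is spotting that pairing with $\mathbf{y}_i$ (rather than, say, trying to identify a single dominant term in the sum) is what collapses the signed sum into a manifestly positive quantity. Note that the argument also reveals a slightly stronger fact that may be useful later: the diagonal entries of $\mathbf{Y}(t+1)\mathbf{Y}(t)^\top$ are all strictly positive, which in turn guarantees $X_{ii}(t+1)=1$ for all $i,t$ and is consistent with the self-loop structure of $\mathcal{S}_1^N$.
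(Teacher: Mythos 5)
Your proof is correct, and it takes a genuinely different and considerably shorter route than the paper's. The paper argues by contradiction: it assumes a row of $\mathbf{Y}(t_0+1)$ vanishes, sets $\mathbf{z}^\top=\mathbf{e}_1^\top\mathrm{sgn}(\mathbf{Y}\mathbf{Y}^\top)$, splits into two cases according to the support of $\mathbf{z}$, parametrizes $\mathbf{Y}$ through an explicit factorization involving matrices $C_a,C_b$, and finally reaches the impossibility $\|C_a^\top\mathbf{1}_{r-1}\|^2<0$. Your argument reaches essentially the same underlying positivity obstruction directly: pairing the updated row with $\mathbf{y}_i^\top$ collapses the signed sum into $\frac{1}{N}\sum_k\lvert\mathbf{y}_i\mathbf{y}_k^\top\rvert\ge\frac{1}{N}\Vert\mathbf{y}_i\Vert^2>0$ (using that $\mathbf{y}_i\mathbf{y}_k^\top$ is a scalar, so $\mathrm{sgn}(s)\cdot s=\lvert s\rvert$), which immediately forbids the row from vanishing. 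This is cleaner, avoids the case split and the relabelling/permutation bookkeeping, and even yields the quantitative lower bound $[\mathbf{Y}(t+1)]_{i,:}\,\mathbf{y}_i^\top\ge\Vert\mathbf{y}_i\Vert^2/N$. One small inaccuracy in your closing aside: $X_{ii}(t+1)=\mathrm{sgn}([\mathbf{Y}(t)\mathbf{Y}(t)^\top]_{ii})$, so the fact that $X_{ii}(t+1)=1$ follows from $\mathbf{y}_i(t)\neq\mathbf{0}$ directly, not from the positivity of the diagonal of $\mathbf{Y}(t+1)\mathbf{Y}(t)^\top$; this does not affect the validity of the main argument.
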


\begin{proof}  
Suppose, by contradiction, that this is not the case, and
let $t_0 \ge 0$ be the smallest time instant such that ${\bf Y}(t_0)$ has no zero rows, but ${\bf Y}(t_0+1)$ has (at least) one zero row.
It entails no loss of generality assuming that the first row  of ${\bf Y}(t_0+1)$  becomes zero   (if not we can always resort to a relabelling of the agents
to reduce ourselves to this case). If we set ${\bf Y}:= {\bf Y}(t_0)$, this means that ${\bf Y}$ has no zero rows, but
$${\bf e}_1^\top {\rm sgn} ({\bf Y Y}^\top) {\bf Y} = {\bf 0}^\top.$$
Set ${\bf z}^\top := {\bf e}_1^\top {\rm sgn} ({\bf Y Y}^\top) \in \{-1,0,1\}^{1\times N}$. We observe that since the first row of ${\bf Y}$ is not zero then the $(1,1)$-entry of ${\bf Y Y}^\top$ is positive and hence the first entry of ${\bf z}$ is $1$. The remaining ones belong to $ \{-1,0,1\}$.
 We distinguish two cases:  either all the other entries of ${\bf z}$ are zero (Case A) or  there exist other nonzero entries in ${\bf z}$ (Case B), and in this latter case we can assume without loss of generality (if not, we can always permute the $m$ topics, namely the $m$ columns of ${\bf Y}$, to make this possible)
that 
$${\bf z}^\top = \begin{bmatrix} 1 &\vline& z_2 & \dots & z_r &\vline& 0 & \dots & 0\end{bmatrix}, \ \begin{array}{c}z_i\in \{-1,1\},\cr i\in [2,r].\end{array}$$
Condition ${\bf z}^\top {\bf Y}= {\bf 0}^\top$ implies that the columns of ${\bf Y}$ are all orthogonal to the vector ${\bf z}$.
In Case B this implies that
${\bf Y}$ can be expressed as
\be
{\bf Y} = \begin{bmatrix}
\begin{matrix} {\bf 1}_{r-1}^\top \cr
\hline
\Sigma
\end{matrix}
&\vline &\begin{matrix} {\bf 0}^\top\cr \hline {\bf 0}^\top\end{matrix} \cr
\hline
0 &\vline& I_{N-r}
\end{bmatrix} \begin{bmatrix}C_a\cr C_b\end{bmatrix}
\label{Y}
\ee
for some matrices $C_a\in {\mathbb R}^{(r-1)\times m}$ and 
 $C_b\in {\mathbb R}^{(N-r)\times m}$, where
 $\Sigma := - {\rm diag}\{   z_2, \dots,  z_r\}.$
 On the other hand, the vector ${\bf z}^\top$ and the matrix ${\bf Y}$ are related by the identity 
 ${\bf z}^\top = {\bf e}_1^\top {\rm sgn} ({\bf Y Y}^\top)=  {\rm sgn} ({\bf e}_1^\top{\bf Y Y}^\top) $, and hence it must be\\
$$\begin{bmatrix} 1 &\vline& z_2 & \dots & z_r &\vline& 0 & \dots & 0\end{bmatrix}=  {\rm sgn}\left( {\bf 1}_{r-1}^\top C_a \begin{bmatrix}C_a^\top &  C_b^\top \end{bmatrix}
 \begin{bmatrix} {\bf 1}_{r-1} &\vline& \Sigma &\vline&0\cr
 0 &\vline& 0 &\vline& I_{N-r}\end{bmatrix}\right).$$
 This implies, in particular, that 
 $$\begin{bmatrix}   z_2 & \dots & z_r  \end{bmatrix}
= {\rm sgn}( {\bf 1}_{r-1}^\top C_a C_a^\top  \Sigma),$$
or, entrywise, keeping into account the definition of $\Sigma$:
$$z_i = - {\rm sgn}( {\bf 1}_{r-1}^\top C_a C_a^\top  z_i {\bf e}_{i-1}), \quad \forall\ i\in [2,r].$$
This amounts to saying that 
$$ {\rm sgn}( {\bf 1}_{r-1}^\top C_a C_a^\top   {\bf e}_{i-1}) = -1, \quad \forall\ i\in [2,r],$$
namely 
${\bf 1}_{r-1}^\top C_a C_a^\top  \ll 0,$
by this meaning that it is a vector with all negative entries. But this would imply $\| C_a^\top {\bf 1}_{r-1}  \|^2= {\bf 1}_{r-1}^\top C_a C_a^\top {\bf 1}_{r-1} < 0$, which is clearly impossible.\\
We consider now Case A. If the only nonzero entry of ${\bf z}$ is the first one, then ${\mathbf Y}$ can be expressed as ${\mathbf Y}={\mathbf W}C_0 $, where ${\mathbf W} = [{\mathbf 0} | I_{N-1} ]^\top$ and $C_0$ is a real matrix of size $(N-1)\times m$.
By resorting to the same reasoning as in Case B, condition ${\bf z}^\top = {\bf e}_1^\top {\rm sgn} ({\bf Y Y}^\top)$ becomes
\begin{equation*}
\begin{bmatrix}
1 &0& \dots& 0
\end{bmatrix} = {\rm sgn} ({\mathbf e}_1 {\mathbf W}C_0C_{\bf 0}^\top {\mathbf W}^\top) = {\rm sgn} ({\mathbf 0}^\top),
\end{equation*}
which is impossible.
Therefore it is not possible that there exists $t_0 \ge 0$   such that ${\bf Y}(t_0)$ has no zero rows, but ${\bf Y}(t_0+1)$ has (at least) one zero row. 
\end{proof}
\bigskip

Based on the preliminary remarks and   Lemma \ref{nozerorows}, we introduce the set \cite{HomophilyMei}
$${\mathcal S}_{nz-rows} := \{{\bf Y}\in {\mathbb R}^{N\times m}: {\bf e}_i^\top {\bf Y} \ne {\bf 0}^\top, \forall\ i\in [1,N]\},$$
and in the following we will steadily assume that ${\bf Y}(0)\in {\mathcal S}_{nz-rows}$, and hence ${\bf Y}(t)\in {\mathcal S}_{nz-rows}$ for every $t\ge 0$. It is worth noticing that, differently from \cite{MeiDorfler}, we do not need to impose
that ${\bf Y}(0)\in \mathcal{Y}:= \{{\bf Y} : {\bf Y}(t) \in {\mathcal S}_{nz-rows} \forall t \geq 0\}$, since for our model it suffices to assume that ${\bf Y}(0)\in {\mathcal S}_{nz-rows}$  to guarantee that ${\bf Y}(t)\in {\mathcal S}_{nz-rows}$, $\forall t \geq 0$.
\\
Note that, as a further consequence, for every $t\ge 0$,
${\bf X}(t+1)= {\rm sgn} ({\bf Y}(t){\bf Y}(t)^\top)$  is a symmetric matrix with unitary diagonal entries,
 and hence belongs to
$\mathcal{S}_{1}^N, \, \forall t \geq 0.$ \\


\begin{remark}
{\color{black} 
The case when  there exists  $t > 0$ such that the matrix ${\mathbf Y}(t)$ has a zero column, even if ${\mathbf Y}(0)$ has no zero columns,  may arise,  but it is a   rare occurrence. This   happens if and only if one of the columns of ${\mathbf Y}(t)$ belongs to the kernel of the matrix ${\bf X}(t+1)= sgn({\mathbf Y}(t){\mathbf Y}(t)^\top)$. This means that at the time $t$ the column vector describing the opinions that the agents have on some specific topic is such that {\em for every agent $i$} the sum of the opinions of the agents trusted by $i$    equals the sum of the opinions of the agents not trusted by agent $i$.  Since the agents'opinions are arbitrary real numbers this case arises for a set of initial conditions ${\mathbf Y}(0)$ having zero measure.

 An elementary example is represented by 
 the case when
 ${\bf Y}(0) = \begin{bmatrix} 1 & \epsilon\cr 2 & -\epsilon\end{bmatrix}$, where $\epsilon$ is nonzero and sufficiently small. Correspondingly,}
we get ${\bf Y}(1) = \begin{bmatrix} 3/2 & 0\cr 3/2 & 0\end{bmatrix}$.
\end{remark}
\bigskip

After having explored these preliminary aspects regarding 
agents that become indifferent to all issues, or issues that become irrelevant to all agents,
 we want to investigate the existence   and structure of the equilibrium points for the   model \eqref{appraisal_dy}- \eqref{opinion_dy},  when starting from initial opinion matrices ${\bf Y}(0)$ satisfying Assumption 1. \medskip

\begin{definition}[Equilibrium point] A pair $({\bf Y}^*, {\bf X}^*)$ is an \emph{equilibrium point} for the   model  \eqref{appraisal_dy}- \eqref{opinion_dy}
if 
\begin{eqnarray}
{\bf X}^* &=& {\rm sgn} ({\bf Y}^* ({\bf Y}^*)^\top) \label{uno}\\
 {\bf Y}^* &=& \frac{1}{N} {\bf X}^* {\bf Y}^*. \label{due}
 \end{eqnarray}
\end{definition}
\medskip

It is interesting to notice that the only possible nontrivial equilibrium points for the model are those
  that correspond to a structurally balanced configuration of the appraisal network ${\mathcal G}({\bf X}^*)$. Moreover, the appraisal network is  necessarily complete, namely each agent needs to express its appraisal   towards all the other agents.
  \medskip
  
\begin{proposition}[Equilibrium equivalence conditions] \label{equilibria}
A pair $({\bf Y}^*, {\bf X}^*)\ne {\bf (0,0)}$  is an equilibrium point for the   model  \eqref{appraisal_dy}- \eqref{opinion_dy} if and only if
\begin{itemize}
\item[i)] ${\bf X}^*={\bf p}{\bf p}^\top$, for some ${\bf p}\in \{-1,1\}^N$;
\item[ii)] ${\bf Y}^*={\bf p}\begin{bmatrix} a_1 & a_2 & \dots & a_m\end{bmatrix}$, for some $a_i\in {\mathbb R}, \sum_{i=1}^m a_i^2\ne 0$.
\end{itemize}
 \end{proposition}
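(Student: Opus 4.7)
The sufficiency is a direct substitution: if ${\bf X}^*={\bf p}{\bf p}^\top$ with ${\bf p}\in\{-1,1\}^N$, and ${\bf Y}^*={\bf p}{\bf a}^\top$ with ${\bf a}=[a_1,\ldots,a_m]^\top$ and $\sum_i a_i^2\ne 0$, then ${\bf Y}^*({\bf Y}^*)^\top=\|{\bf a}\|^2\,{\bf p}{\bf p}^\top$, whose entrywise sign is exactly ${\bf p}{\bf p}^\top={\bf X}^*$, verifying \eqref{uno}; and, using ${\bf p}^\top{\bf p}=N$, one has $\tfrac{1}{N}{\bf X}^*{\bf Y}^*=\tfrac{1}{N}{\bf p}({\bf p}^\top{\bf p}){\bf a}^\top={\bf p}{\bf a}^\top={\bf Y}^*$, verifying \eqref{due}.

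For necessity, my plan is to read \eqref{due} spectrally. The hypotheses and \eqref{uno} force ${\bf Y}^*\ne{\bf 0}$ (otherwise ${\bf X}^*={\rm sgn}({\bf 0})={\bf 0}$ as well), so at least one column ${\bf y}={\bf Y}^*{\bf e}_k$ is nonzero, and \eqref{due} then reads ${\bf X}^*{\bf y}=N{\bf y}$; in particular $N\in\sigma({\bf X}^*)$. I next verify that ${\bf Y}^*$ has no zero rows: letting $I$ index the nonzero rows of ${\bf Y}^*$, the rows of ${\bf X}^*$ outside $I$ vanish, so the restricted equation ${\bf X}^*_{II}\,{\bf y}|_I=N\,{\bf y}|_I$ exhibits $N$ as an eigenvalue of the $|I|\times|I|$ symmetric $\{-1,0,1\}$-valued matrix ${\bf X}^*_{II}$, whose spectral radius is bounded by its maximum absolute row sum, and hence by $|I|$. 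This forces $|I|=N$, so ${\bf Y}^*\in{\mathcal S}_{nz-rows}$ and ${\bf X}^*\in{\mathcal S}_1^N$.

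The pivotal computation is then the Frobenius-norm identity
\[
\|{\bf X}^*\|_F^2 \;=\; \mathrm{tr}\bigl(({\bf X}^*)^2\bigr) \;=\; \sum_{i=1}^N \lambda_i^2({\bf X}^*),
\]
whose right-hand side is at least $N^2$ because $N\in\sigma({\bf X}^*)$, while the left-hand side is at most $N^2$ because $|X^*_{ij}|\le 1$. Equality forces $|X^*_{ij}|=1$ for every $i,j$ and all eigenvalues of ${\bf X}^*$ other than $N$ to vanish, so ${\bf X}^*\in{\mathcal S}_1^N\cap\{-1,1\}^{N\times N}$ has rank one; Lemma \ref{StructBal} (equivalence (i)$\Leftrightarrow$(ii)) then delivers ${\bf X}^*={\bf p}{\bf p}^\top$ for some ${\bf p}\in\{-1,1\}^N$. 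Since the $N$-eigenspace of ${\bf p}{\bf p}^\top$ coincides with $\mathrm{span}({\bf p})$, every column of ${\bf Y}^*$ is a scalar multiple of ${\bf p}$, whence ${\bf Y}^*={\bf p}\,[a_1\;\cdots\;a_m]$, and ${\bf Y}^*\ne{\bf 0}$ rules out ${\bf a}={\bf 0}$. The step I expect to require the most care is the "no zero rows" reduction: the clean bound $\lambda_{\max}({\bf X}^*)\le N$ only works once we know ${\bf X}^*$ is $N\times N$ with unit diagonal, so the argument must first pass to the support of nonzero rows of ${\bf Y}^*$ and apply the row-sum spectral bound there; once that hurdle is cleared, everything is driven by the Frobenius-norm equality and the rank-one characterization in Lemma \ref{StructBal}.
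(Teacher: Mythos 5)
Your proof is correct, and the necessity direction takes a genuinely different route from the paper's. The paper's argument is shorter on the surface: it observes that the nonzero columns of ${\bf Y}^*$ are eigenvectors of $\frac{1}{N}{\bf X}^*$ for the eigenvalue $1$ and then invokes Lemma~\ref{eig1} from the Appendix, whose proof is an argmax/alignment argument (pick the entry of the eigenvector of largest modulus and force every term in the corresponding row sum to saturate). You replace that lemma with a trace pinching: $N^2 \le \sum_i \lambda_i^2({\bf X}^*) = \|{\bf X}^*\|_F^2 \le N^2$, so all entries are $\pm 1$ and all other eigenvalues vanish, after which Lemma~\ref{StructBal} (rank one $\Leftrightarrow$ ${\bf p}{\bf p}^\top$) finishes the identification of ${\bf X}^*$. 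The two arguments buy essentially the same conclusion, but yours adds one piece of rigor the paper elides: Lemma~\ref{eig1} is stated only for ${\bf M}\in{\mathcal S}_1^N$, i.e.\ with unit diagonal, which for ${\bf X}^*={\rm sgn}({\bf Y}^*({\bf Y}^*)^\top)$ presupposes that ${\bf Y}^*$ has no zero rows; the paper justifies this only implicitly through the standing Assumption~1 and Lemma~\ref{nozerorows} on trajectories, whereas the equilibrium definition itself does not exclude zero rows. Your restriction-to-the-support argument (showing $N\in\sigma({\bf X}^*_{II})$ forces $|I|=N$ via the row-sum bound) closes that gap cleanly and makes the proposition self-contained as a statement about fixed points rather than about limits of admissible trajectories.
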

 
 \begin{proof} It is immediate to observe that if i) and ii) hold, then the identities \eqref{uno} and \eqref{due} hold.
 \\
 Conversely, assume that the pair $({\bf Y}^*, {\bf X}^*)$ is an equilibrium point. Then \eqref{due} holds, but this means that the nonzero columns of ${\bf Y}^*$ are eigenvectors of $\frac{1}{N} {\bf X}^*$ corresponding to the unitary eigenvalue.
 This means that $1\in \sigma\left(\frac{1}{N} {\bf X}^*\right)$ and therefore, by Lemma \ref{eig1} in the Appendix, i) holds.
 On the other hand, by replacing  the matrix ${\bf X}^*$ in \eqref{due} with ${\bf p}{\bf p}^\top$, we obtain
 ii).\end{proof}
 \bigskip

 {\color{black}
\begin{remark} The non-trivial equilibrium points of the model are \emph{modulus consensus} configurations, see, e.g., \cite{MeiDorfler}. This is also what happens for  equilibrium points   in \cite{Altafini2013} and \cite{MeiDorfler}. Moreover, when the model converges to the non-trivial equlibrium configurations, the sign distribution of the opinions mirrors the network partition into factions. \\
As in Altafini's model \cite{Altafini2013}, and as it will be clear in the following, the system dynamics either achieves modulus consensus (in a finite number of steps) or  converges to zero (asymptotically). \end{remark}
}

 \begin{remark} 
  This situation is different from the one that arises with the model investigated in \cite{Bullo2020},     \cite{MeiDorfler} and \cite{not_all_to_all}.
In \cite{MeiDorfler} (see Remark 4 and section IV in \cite{MeiDorfler}) 
 the equilibrium points identified in the previous Proposition \ref{equilibria} are not the only possible ones. Indeed, for the model explored in
\cite{MeiDorfler} the case may occur that the matrix ${\bf X}^*$ at the equilibrium corresponds to a non connected graph, whose connected components however achieve structural balance.
 As we will see  later (see Remark \ref{disconnected}), if ${\mathcal G}({\bf X}^*)$ becomes disconnected then both the opinion matrix and the appraisal matrix converge to zero. 
 On the other hand,   for   the binary model in \cite{not_all_to_all}, convergence to a non-all-to-all structurally balanced network is also possible while    the topological structure of the associated   graph is time invariant.  Also, under some conditions, convergence to ``two-factions" structural balance  is obtained in finite time.
The signed formation game in \cite{Bullo2020} dynamically drives the network towards clustering balance.
 \end{remark}
 \medskip

We want now to show that the model we have proposed cannot exhibit periodic solutions and hence limit cycles.
To prove this result we need a preliminary lemma, that will be useful also for   the subsequent analysis.
 \bigskip
 
\begin{lemma}[Upper bounded opinion dynamics]\label{lemmaAB}For every $j\in[1,m]$ and every $t\ge 0$
\\
i)\ 
 \be
 \max_{i\in [1,N]} |{ Y}_{ij}(t+1)| \le  \max_{i\in [1,N]} |{Y}_{ij}(t)|.
 \label{upper}
 \ee
ii)\ Condition
 $$ \max_{i\in [1,N]} |{Y}_{ij}(t+1)| =  \max_{i\in [1,N]} |{ Y}_{ij}(t)|   \ne 0$$
 holds if and only if\\
 (a) ${\bf Y}(t) {\bf e}_j= {\bf p} \cdot \mu_j, \ \exists\ {\bf p}\in \{-1,1\}^N$ and $\mu_j > 0;$ and\\
 (b) once we set $h := {\rm argmax}_{i\in [1,N]} |{Y}_{ij}(t+1)|$ then ${\bf e}_h^\top {\bf X}(t+1)$ has no zero entries 
 and
 ${\bf e}_h^\top {\bf X}(t+1)= p_h\cdot {\bf p}^\top$.
  \end{lemma}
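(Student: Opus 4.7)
For part (i), the plan is a direct chain of inequalities starting from the componentwise update \eqref{opinion_comp}, namely $Y_{ij}(t+1)=\frac{1}{N}\sum_{k=1}^N X_{ik}(t+1)Y_{kj}(t)$. Applying the triangle inequality, using $|X_{ik}(t+1)|\le 1$ (because ${\bf X}(t+1)\in\{-1,0,1\}^{N\times N}$), and bounding each $|Y_{kj}(t)|$ by $M_j:=\max_{i\in[1,N]}|Y_{ij}(t)|$ gives $|Y_{ij}(t+1)|\le M_j$ for every $i$, which is exactly \eqref{upper}.

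For the ``only if'' direction of (ii), fix $h:={\rm argmax}_{i\in[1,N]}|Y_{ij}(t+1)|$ and assume $|Y_{hj}(t+1)|=M_j>0$. I would then track which of the three inequalities used in (i) must be tight at $i=h$. Tightness of the outer bound $\frac{1}{N}\sum_k|Y_{kj}(t)|\le M_j$ forces $|Y_{kj}(t)|=M_j$ for every $k$, so the $j$-th column of ${\bf Y}(t)$ has the form $\mu_j{\bf p}$ with $\mu_j=M_j>0$ and ${\bf p}\in\{-1,1\}^N$, giving (a). Given (a), tightness of $\sum_k|X_{hk}(t+1)||Y_{kj}(t)|\le\sum_k|Y_{kj}(t)|$ forces $|X_{hk}(t+1)|=1$ for every $k$, so ${\bf e}_h^\top{\bf X}(t+1)$ has no zero entries. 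Finally, tightness of the triangle inequality forces the summands $X_{hk}(t+1)\,p_k\,\mu_j$ to share a common sign; the self-loop property $X_{hh}(t+1)=1$ (which holds because ${\bf X}(t+1)\in\mathcal{S}_1^N$) pins that common sign down to $p_h$, yielding $X_{hk}(t+1)\,p_k=p_h$ and hence, since $p_k^2=1$, $X_{hk}(t+1)=p_h p_k$ for every $k$. This is precisely ${\bf e}_h^\top{\bf X}(t+1)=p_h\,{\bf p}^\top$, completing (b).

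For the ``if'' direction, assume (a) and (b). Substituting (a) into the opinion update gives $Y_{ij}(t+1)=\frac{\mu_j}{N}\sum_k X_{ik}(t+1)\,p_k$, whence $|Y_{ij}(t+1)|\le\mu_j$ for every $i$. Using (b) at the index $h$ one computes $Y_{hj}(t+1)=\frac{\mu_j}{N}\sum_k p_h p_k^2=p_h\mu_j$, so $|Y_{hj}(t+1)|=\mu_j$. Combining these two inequalities delivers $\max_i|Y_{ij}(t+1)|=\mu_j=\max_i|Y_{ij}(t)|$, as desired.

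The only delicate bookkeeping step is the sign tracking in the forward direction: namely, isolating the equality cases of the triangle inequality and of $|X_{hk}|\le 1$ simultaneously, and then invoking the mandatory self-loop $X_{hh}(t+1)=1$ to identify the common sign of the summands with $p_h$. Everything else is a routine application of elementary equality cases.
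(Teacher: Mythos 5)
Your proposal is correct and follows essentially the same route as the paper's own proof: part (i) via the triangle inequality combined with $|X_{ik}(t+1)|\le 1$, and part (ii) by analysing when each inequality in that chain is tight, using the unit diagonal of ${\bf X}(t+1)$ to fix the common sign as $p_h$. The only cosmetic difference is that you separate the two implications of (ii) and verify the ``if'' direction by direct substitution, whereas the paper phrases the whole argument as a chain of equivalences.
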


 \begin{proof}
 i)\ From equation 
 \eqref{opinion_dy} it follows that
 \begin{eqnarray*}
|{ Y}_{ij}(t+1)| &=& \lvert \frac{1}{N} \sum_{k=1}^N { X}_{ik}(t+1){ Y}_{kj}(t) \lvert    \leq \frac{1}{N}  \sum_{k=1}^N |{ X}_{ik}(t+1)||{ Y}_{kj}(t) | \\
&\leq& \frac{1}{N}  \sum_{k=1}^N |{ Y}_{kj}(t) |  \leq \frac{1}{N} N\max_{k }|{ Y}_{kj}(t) | = \max_{k}|{ Y}_{kj}(t) |,
\end{eqnarray*}
and hence \eqref{upper} holds.\\
ii) Set $h := {\rm argmax}_{i\in [1,N]} |{ Y}_{ij}(t+1)|$. Then 
 $|{ Y}_{hj}(t+1)| = {\rm max}_{i\in [1,N]} |{\ Y}_{ij}(t+1)|$ coincides with   $\max_{i\in [1,N]} |{ Y}_{ij}(t)|$ if and only if
$$\sum_{\ell=1}^N |{ X}_{h\ell}(t+1)||{ Y}_{\ell j}(t) | = N \cdot  \max_{i\in [1,N]} |{Y}_{ij}(t)|$$
and this is possible if and only if 
 all the entries in the $j$-th column of ${\bf Y}(t)$ have the same absolute value $\mu_j> 0$ (and this leads to (a), for some suitable vector ${\bf p}$) 
and all the terms 
${ X}_{h\ell}(t+1){ Y}_{\ell j}(t)$, $\ell\in [1,N]$, have the same sign. But this latter condition means that 
  ${\bf e}_h^\top {\bf X}(t+1)$ either coincides with   ${\bf p}^\top$ or with its opposite, and since ${ X}_{hh}(t+1)=1$ this means that condition (b) holds.
 \end{proof}
\bigskip

We are now in a position to prove the following result.
\bigskip

\begin{proposition}[Aperiodicity in opinion dynamics] Suppose that there exist $\bar t\ge 0$, $T\ge 1$ and nonzero matrices $\tilde{\bf Y}_i\in {\mathbb R}^{N\times m}, i\in [1,T],$ such that
$${\bf Y}(\bar t +i) = \tilde{\bf Y}_i, i\in [1,T], \quad {\rm and}\quad
{\bf Y}(\bar t+T+1)= \tilde{\bf Y}_1,$$
namely from $\bar t +1$ onward the sequence of matrices $\{{\bf Y}(t)\}_{t\ge \bar t+1}$ becomes periodic of period $T$, then $T=1$, 
namely the sequence becomes constant.
\end{proposition}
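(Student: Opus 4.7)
\medskip

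\textbf{Proof plan.} The core idea is that Lemma \ref{lemmaAB} gives column-wise monotonicity of the quantities $M_j(t):=\max_{i\in [1,N]} |Y_{ij}(t)|$, and any periodic non-increasing real sequence is constant. The plan is to exploit the equality case of Lemma \ref{lemmaAB} column-by-column and deduce that one full application of the update map fixes ${\bf Y}$.

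First I would fix an arbitrary column index $j\in [1,m]$ and observe, using Lemma \ref{lemmaAB}(i), that the sequence $\{M_j(t)\}_{t\ge \bar t+1}$ is non-increasing. Since periodicity of ${\bf Y}$ makes this sequence periodic as well, it must be constant. I would then split the columns into two classes: those with $M_j=0$ (for which the $j$-th column of ${\bf Y}(t)$ is identically zero on the periodic orbit, and nothing needs to be proved) and those with $M_j>0$. Note that at least one such column must exist, because $\tilde{\bf Y}_1\neq \mathbf{0}$.

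For each column $j$ with $M_j>0$, the equality case Lemma \ref{lemmaAB}(ii)(a) gives, for every $t\ge\bar t+1$, a sign vector ${\bf p}^{(j)}(t)\in\{-1,1\}^N$ such that ${\bf Y}(t){\bf e}_j={\bf p}^{(j)}(t)\,\mu_j$, with the common modulus $\mu_j>0$ independent of $t$. Applying this at both $t$ and $t+1$ and combining with the dynamics \eqref{opinion_dy} yields
\begin{equation*}
{\bf X}(t+1)\,{\bf p}^{(j)}(t)=N\,{\bf p}^{(j)}(t+1).
\end{equation*}
Since ${\bf X}(t+1)\in\{-1,0,1\}^{N\times N}$, the only way a row of ${\bf X}(t+1)$ can produce a value of magnitude $N$ against a $\{-1,1\}$-vector is to be itself a $\{-1,1\}$-vector whose entries match, up to a global sign, those of ${\bf p}^{(j)}(t)$. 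Using the constraint $X_{ii}(t+1)=1$, I would fix this global sign on row $i$ to be $p^{(j)}_i(t)$, obtaining ${\bf X}(t+1)={\bf p}^{(j)}(t)\,({\bf p}^{(j)}(t))^\top$. Substituting this back into \eqref{opinion_dy} gives
\begin{equation*}
{\bf Y}(t+1){\bf e}_j=\tfrac{1}{N}\,{\bf p}^{(j)}(t)\,({\bf p}^{(j)}(t))^\top{\bf p}^{(j)}(t)\,\mu_j={\bf p}^{(j)}(t)\,\mu_j={\bf Y}(t){\bf e}_j.
\end{equation*}

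Putting the two cases together shows that ${\bf Y}(t+1)={\bf Y}(t)$ for every $t\ge \bar t+1$, which forces $T=1$. The step I expect to require the most care is the passage from ``the maximum is preserved along the orbit'' to ``the appraisal matrix has rank one''; the delicate point is that the equality clause in Lemma \ref{lemmaAB}(ii) only records the structure of a single row of ${\bf X}(t+1)$, so the global conclusion ${\bf X}(t+1)={\bf p}^{(j)}(t)({\bf p}^{(j)}(t))^\top$ must be obtained indirectly, by invoking the equality clause also at time $t+1$ and reading off every row of ${\bf X}(t+1)$ from the identity ${\bf X}(t+1){\bf p}^{(j)}(t)=N{\bf p}^{(j)}(t+1)$ as above.
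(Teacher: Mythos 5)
Your proof is correct and follows essentially the same route as the paper's: both hinge on Lemma \ref{lemmaAB} to show that the column maxima are constant along the periodic orbit, then exploit the equality case to force ${\bf X}(t+1)$ into the rank-one form ${\bf p}{\bf p}^\top$ and conclude that one application of the update map already fixes ${\bf Y}$. Your only (harmless) deviations are that you re-derive the rank-one structure directly from the identity ${\bf X}(t+1){\bf p}^{(j)}(t)=N{\bf p}^{(j)}(t+1)$ together with $X_{ii}(t+1)=1$, instead of invoking part (ii)(b) of the lemma, and that you treat possible zero columns explicitly, a case the paper's displayed claim $\mu_j>0$ for all $j$ glosses over before restricting to nonzero columns.
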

 
 \begin{proof}
 From Lemma \ref{lemmaAB}, part i), we can claim that for every $j\in [1,m]$ and every $t\ge 0$
$$\begin{array}{l}
 \max_{\ell\in [1,N]} |{ Y}_{\ell j}(t+T+1 )| \le   \max_{\ell \in [1,N]} |{ Y}_{\ell j}(t+T)| \cr
 \cr
 \le ... \le \max_{\ell \in [1,N]} |{ Y}_{\ell j}(t+2)| \le  \max_{\ell \in [1,N]} |{ Y}_{\ell j}(t+1)|.
 \end{array}$$
But since  for $t=\bar t$ we have ${\bf Y}(\bar t+T+1 )= {\bf Y}(\bar t+1 )=\tilde {\bf Y}_1$ and hence the two extremes in the previous sequence of inequalities coincide, it follows that all the symbols $\le$ are equalities, namely
$$ \max_{\ell\in [1,N]} |[\tilde{\bf Y}_i]_{\ell j}| =\mu_j > 0, \qquad \forall\ j\in [1,m], \ \forall\ i\in [1,T].$$
This also implies, see Lemma \ref{lemmaAB} part ii), that,  for every non zero column in $\tilde{\bf Y}_i$,
\be
\tilde{\bf Y}_i{\bf e}_j = {\bf p}_i \cdot \mu_j, \qquad \exists\ {\bf p}_i\in \{-1,1\}^N,  \mu_j >0,
\label{one}
\ee
and that, for every $h\in [1,N]$, one has ${\bf e}_h^\top  {\rm sgn}(\tilde{\bf Y}_i\tilde{\bf Y}_i^\top) = [{\bf p}_i]_h \cdot {\bf p}_i^\top$.
This implies that for every $i\in [1,T]$ 
$${\rm sgn}([\tilde{\bf Y}_i\tilde{\bf Y}_i^\top]) = {\bf p}_i {\bf p}_i^\top, \qquad \exists\ {\bf p}_i\in \{-1,1\}^{N\times N}.$$
Consequently\footnote{The expression $i+1 \mod T$ means the remainder of  $i+1$ when divided by $T$.}
\be
\tilde{\bf Y}_{(i+1 \mod T)} = \frac{1}{N} {\rm sgn}([\tilde{\bf Y}_i\tilde{\bf Y}_i^\top]) \tilde{\bf Y}_i= \frac{1}{N}  {\bf p}_i {\bf p}_i^\top \tilde{\bf Y}_i.
\label{two}
\ee
So, by comparing \eqref{one} and \eqref{two}  one gets 
that every matrix $\tilde{\bf Y}_i, i\in [1,T],$ takes 
the form 
$$\tilde{\bf Y}_i= {\bf p}_i \begin{bmatrix} a_1^{(i)} & \dots & a_m^{(i)}\end{bmatrix}, \qquad \exists\ {\bf p}_i\in \{-1,1\}^N, a_k^{(i)} \in {\mathbb R}, $$
but this also implies that
$$
\begin{array}{l}
\tilde{\bf Y}_{(i+1 \mod T)} = \frac{1}{N} {\rm sgn}([\tilde{\bf Y}_i\tilde{\bf Y}_i^\top]) \tilde{\bf Y}_i
=  \frac{1}{N} {\rm sgn}({\bf p}_i {\bf p}_i^\top \cdot \sum_k [a_k^{(i)}]^2) {\bf p}_i \begin{bmatrix} a_1^{(i)} & \dots & a_m^{(i)}\end{bmatrix} \cr
\cr
= 
\frac{1}{N}  {\bf p}_i {\bf p}_i^\top  {\bf p}_i \begin{bmatrix} a_1^{(i)} & \dots & a_m^{(i)}\end{bmatrix}
= {\bf p}_i \begin{bmatrix} a_1^{(i)} & \dots & a_m^{(i)}\end{bmatrix}
= \tilde{\bf Y}_i.
\end{array}$$

So, all matrices $\tilde{\bf Y}_i$ coincide.
 \end{proof}
\medskip

\section{Convergence to an equilibrium in a finite number of steps}\label{convergence_FT}

  We want to explore   under what conditions the equilibrium can be reached in a finite number of steps.
It is easy to see that
 if there exists a time instant $t_0 \geq 0$ such that 
${\mathbf Y}(t_0+1) = {\bf Y}(t_0)\ne 0$ then ${\bf Y}(t)= {\bf Y}(t_0)=: {\bf Y}^*$ for every $t\ge t_0$. Consequently, 
also ${\bf X}(t)$ becomes constant starting at $t=t_0+1$, and it coincides with ${\mathbf X}^*:= {\rm sgn}({\bf Y}(t_0){\bf Y}(t_0)^\top).$
\\
However, the converse is not true:  if  the appraisal matrix   becomes constant at some time $t_0\ge 0$, the opinion matrix ${\bf Y}(t)$ can still keep evolving for $t\ge t_0$. This situation is illustrated in Example \ref{example1}, below.

 As a matter of fact,
  if there exists a time instant $t_0 \geq 0$ such that ${\mathbf X}(t) = {\mathbf X}^*, \forall t \geq t_0$, we can only claim that   ${\mathbf Y}(t+1) =  \frac{1}{N} {\mathbf X}^*{\mathbf Y}(t)$. Equivalently, if we denote by ${\mathbf y}_j(t)$, the $j$-th column of the matrix ${\mathbf Y}(t)$, then the dynamics expressed by equation \eqref{appraisal_dy} decomposes into $m$ linear time invariant systems of the form
\begin{equation}\label{componet_yse}
{\mathbf y}_j(t+1) = \frac{1}{N} {\mathbf X}^*{\mathbf y}_j(t), \, \forall \ j \in [1,m].
\end{equation} 
As ${\mathbf X}^*\in S_1^N$, the matrix $\frac{{\mathbf X}^*}{N}$  is symmetric and hence diagonalizable. Moreover,
by Gershgorin Circle theorem, all its (real) eigenvalues $\lambda_i, i\in [1,N],$ satisfy
\begin{equation*}
\lvert \lambda_i-\frac{1}{N}\lvert \leq \frac{N-1}{N} \iff -\frac{N-2}{N}\leq \lambda_i\leq 1, \, \forall i \in [1,N].
\end{equation*}
As a consequence, two cases may arise.  The first case is  the one depicted in Example \ref{example1}, namely the case when  the systems in \eqref{componet_yse} are  asymptotically stable, which means that ${\bf Y}(t)$ asymptotically converges to $0$ (and hence $\lim_{t\to +\infty} {\bf X}(t)=0 \ne {\bf X}^*$). 
\medskip

\begin{example}\label{example1}
 Let us consider   the case $N = m = 3$, with 
\begin{equation*}
\mathbf{Y}(0) = \begin{bmatrix} 1.41 & -1.21 & 0.49\\
1.42 & 0.72 & 1.03 \\
0.67 & 1.63 & 0.73
\end{bmatrix}.
\end{equation*} 
It turns out that $\forall t \geq 1$ 
\begin{equation*}
\mathbf{X}(t) =\mathbf{X}^* =\begin{bmatrix} 1 & 1 & -1\\
1 & 1& 1 \\
-1 & 1& 1
\end{bmatrix}
\end{equation*} and $\sigma(1/3\cdot {\bf X}^*)= (-1/3, 2/3,2/3)$, and indeed   for $t \geq 14$ we have $Y_{ij}(t) = o(10^{-2}), \forall i,j \in [1,3]$.  
\end{example}\bigskip

The second possible situation is when
$\frac{{\mathbf X}^*}{N}$ is simply (but not asymptotically) stable.
 This amounts to saying that $1$ is a (simple) eigenvalue of $\frac{{\mathbf X}^*}{N}$, and hence by Lemma \ref{eig1}, ${\mathbf X}^*$ takes the form 
 ${\mathbf X}^* =  {\mathbf p} {\mathbf p}^\top$, $\exists\ {\mathbf p} \in \{-1,1\}^N$. In this case, the convergence is not asymptotic but instantaneous.
 In fact, it is sufficient that $\frac{1}{N} {\bf X}(t_0)$ becomes simply (but not asymptotically) stable at a single time instant, to ensure the instantaneous convergence of ${\bf Y}(t)$ to an equilibrium condition.
 \medskip

 \begin{proposition}[Equilibrium points characterization]\label{Xrango1} If there exists
${\color{black}t_0 > 0}$ such that $\frac{1}{N} {\bf X}(t_0)$,  with $ {\bf X}(t_0)\in {\mathcal S_{1}^N}$, is simply (but not asymptotically) stable,    then
$({\bf X}^*,{\bf Y}^*) := ({\bf X}(t_0),\frac{1}{N} {\bf X}(t_0) {\bf Y}(t_0-1))$ is an equilibrium point.
%
 \end{proposition}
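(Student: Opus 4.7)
The plan is to exploit the hypothesis together with Lemma \ref{eig1} to pin down the structure of $\mathbf{X}(t_0)$, then plug this into the definition of $\mathbf{Y}^*$ and verify the two equilibrium identities directly.

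First I would observe that, since $\mathbf{X}(t_0)\in \mathcal{S}_1^N$ is symmetric, all eigenvalues of $\frac{1}{N}\mathbf{X}(t_0)$ are real, and the Gershgorin bound derived just before the statement forces them to lie in $[-(N-2)/N,1]$. Hence the matrix being simply (but not asymptotically) stable is equivalent to $1$ being a simple eigenvalue. By Lemma \ref{eig1} this gives $\mathbf{X}(t_0)=\mathbf{p}\mathbf{p}^\top$ for some $\mathbf{p}\in\{-1,1\}^N$, which already establishes condition (i) of Proposition \ref{equilibria} for $\mathbf{X}^*$.

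Next I would write $\mathbf{Y}^* = \tfrac{1}{N}\mathbf{p}\mathbf{p}^\top \mathbf{Y}(t_0-1) = \mathbf{p}\mathbf{a}^\top$, where $\mathbf{a}^\top := \tfrac{1}{N}\mathbf{p}^\top\mathbf{Y}(t_0-1) \in \mathbb{R}^{1\times m}$. This is precisely the form required by condition (ii) of Proposition \ref{equilibria}. To invoke that proposition I still need $\mathbf{a}\neq \mathbf{0}$; I would justify this by noting that, under Assumption 1 and Lemma \ref{nozerorows}, the matrix $\mathbf{Y}^*=\mathbf{Y}(t_0)$ has no zero rows, and since $\mathbf{Y}^*=\mathbf{p}\mathbf{a}^\top$ with $\mathbf{p}\in\{-1,1\}^N$, this forces $\sum_{i=1}^m a_i^2 \neq 0$.

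Finally I would verify the two equilibrium identities explicitly, rather than merely citing Proposition \ref{equilibria}, to keep the argument self-contained: the opinion identity is
\begin{equation*}
\tfrac{1}{N}\mathbf{X}^*\mathbf{Y}^* = \tfrac{1}{N}\mathbf{p}\mathbf{p}^\top \mathbf{p}\mathbf{a}^\top = \mathbf{p}\mathbf{a}^\top = \mathbf{Y}^*,
\end{equation*}
while the appraisal identity follows from
\begin{equation*}
\operatorname{sgn}\bigl(\mathbf{Y}^*(\mathbf{Y}^*)^\top\bigr) = \operatorname{sgn}\bigl(\|\mathbf{a}\|^2\, \mathbf{p}\mathbf{p}^\top\bigr) = \mathbf{p}\mathbf{p}^\top = \mathbf{X}^*.
\end{equation*}
No step looks like a serious obstacle: the whole argument is a direct consequence of Lemma \ref{eig1} plus a rank-one algebraic manipulation. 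The only mildly subtle point is making sure that $\mathbf{a}\neq\mathbf{0}$, which is why the no-zero-rows invariant from Lemma \ref{nozerorows} (and hence Assumption~1) is genuinely needed.
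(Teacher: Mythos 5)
Your proof is correct and follows essentially the same route as the paper: invoke Lemma \ref{eig1} to get $\mathbf{X}(t_0)=\mathbf{p}\mathbf{p}^\top$, then write $\mathbf{Y}^*=\mathbf{p}\mathbf{a}^\top$ with $\mathbf{a}^\top=\tfrac{1}{N}\mathbf{p}^\top\mathbf{Y}(t_0-1)$ and conclude. Your explicit verification of the two equilibrium identities and of $\mathbf{a}\neq\mathbf{0}$ (via Assumption 1 and Lemma \ref{nozerorows}, without which the appraisal identity $\mathbf{X}^*=\operatorname{sgn}(\mathbf{Y}^*(\mathbf{Y}^*)^\top)$ would actually fail) supplies details the paper's proof leaves implicit.
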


\begin{proof}  By Lemma \ref{eig1} in the Appendix, we know that 
if $\frac{1}{N} {\bf X}(t_0)\in {\mathcal S_{1}^N}$ is simply stable or, equivalently, $1\in \sigma(\frac{1}{N} {\bf X}(t_0))$, then there exists a vector $ {\mathbf p} \in \{-1,1\}^{N}$ such that ${\mathbf X}(t_0)=  {\mathbf p} {\mathbf p}^\top$. 
On the other hand, if ${\bf X}(t_0)=   {\mathbf p} {\mathbf p}^\top$, then
$${\bf Y}(t_0)= \frac{1}{N} {\mathbf p} {\mathbf p}^\top {\bf Y}(t_0-1) = {\mathbf p} [a_1, \dots, a_m],$$
where
$$ [a_1, \dots, a_m] := \frac{1}{N} {\mathbf p}^\top  {\mathbf Y}(t_0-1).$$
Therefore 
$({\bf X}^*,{\bf Y}^*) := ({\bf X}(t_0),\frac{1}{N} {\bf X}(t_0) {\bf Y}(t_0-1))$ is an equilibrium point.
\end{proof}
\medskip

\begin{remark}
 If $m=1$ the model reaches the equilibrium in one step.   When so, in fact ${\bf X}(1) = {\rm sgn}({\bf Y}(0){\bf Y}^\top(0))  = {\bf p}{\bf p}^\top$, where ${\bf p} :=  {\rm sgn}({\bf Y}(0))$. 
  Numerical    simulations at the end of the paper will show   that, when $m>1$, namely multiple topics are considered, convergence to structural balance is almost surely guaranteed, and it occurs in a rather   small number of steps even for   medium size networks (e.g. $N=100$).
\end{remark}

  \begin{remark} \label{disconnected}        Gershgorin Circle theorem also allows to say that if $\mathbf{X}^*$ is the adjacency matrix of a disconnected graph, all the eigenvalues   of the matrix $\frac{{\mathbf X}^*}{N}$    lie in the circle of the complex plane
of center the origin and radius $\frac{N-1}{N}$ (or smaller),  and hence   $\frac{{\mathbf X}^*}{N}$  is necessarily an asymptotically stable matrix.  
\end{remark}

 Theorem \ref{finite_time_balance} summarizes the main results of this section.  

  \begin{theorem}[Main theorem]\label{finite_time_balance}
The following conditions are equivalent
\begin{itemize}
\item[i)] there exists a time instant $t_0\ge 0$ such that $1\in \sigma (\frac{1}{N}{\mathbf X}(t_0))$; 
\item[ii)] there exists a time instant $t_0\ge 0$ such that ${\mathbf Y}(t_0) = {\mathbf Y}(t_0+1)$;
\item[iii)] the opinion-appraisal dynamic model  \eqref{appraisal_dy}- \eqref{opinion_dy} converges in finite time to an equilibrium $({\bf X}^*, {\bf Y}^*)$; 
\item[iv)] the opinion-appraisal dynamic model \ \eqref{appraisal_dy}- \eqref{opinion_dy} converges in finite time to an equilibrium $({\bf X}^*, {\bf Y}^*)$, with
${\mathbf X}^* =  {\mathbf p} {\mathbf p}^\top$ and  ${\mathbf Y}^* = {\mathbf p} [a_1, \dots, a_m]$,
$\exists\ {\mathbf p} \in \{-1,1\}^N$, and  $a_i\in {\mathbb R}, i  \in [1,m]$, with $\sum_{i=1}^m a_i^2\ne 0$. \end{itemize}
\end{theorem}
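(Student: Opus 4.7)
The plan is to close the cyclic chain of implications (i) $\Rightarrow$ (iv) $\Rightarrow$ (iii) $\Rightarrow$ (ii) $\Rightarrow$ (i), leaning on Lemma \ref{eig1} from the Appendix, Proposition \ref{Xrango1}, and the no-zero-rows invariant of Lemma \ref{nozerorows}. The two implications (iv) $\Rightarrow$ (iii) and (iii) $\Rightarrow$ (ii) are essentially immediate from the definitions: (iv) is a special case of (iii), and at the equilibrium identified by (iii) one has $\mathbf{Y}(t_0) = \mathbf{Y}^* = \mathbf{Y}(t_0+1)$, which gives (ii). The real content therefore lies in the two remaining implications.

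For (i) $\Rightarrow$ (iv), I would first combine $\mathbf{X}(t_0)\in \mathcal{S}_1^N$ with $1\in \sigma\!\left(\frac{1}{N}\mathbf{X}(t_0)\right)$ and invoke Lemma \ref{eig1} in the Appendix, which forces $\mathbf{X}(t_0)=\mathbf{p}\mathbf{p}^\top$ for some $\mathbf{p}\in \{-1,1\}^N$. Since $\frac{1}{N}\mathbf{p}\mathbf{p}^\top$ is a rank-one symmetric matrix whose only nonzero eigenvalue is the simple eigenvalue $1$, it is simply-but-not-asymptotically stable, so Proposition \ref{Xrango1} applies and produces the equilibrium $(\mathbf{X}^*,\mathbf{Y}^*)=\left(\mathbf{X}(t_0),\frac{1}{N}\mathbf{X}(t_0)\mathbf{Y}(t_0-1)\right)$, with $\mathbf{Y}^*=\mathbf{p}[a_1,\dots,a_m]$ and $[a_1,\dots,a_m]=\frac{1}{N}\mathbf{p}^\top\mathbf{Y}(t_0-1)$. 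The nontriviality $\sum_{i=1}^m a_i^2\neq 0$ is forced by Lemma \ref{nozerorows}, which guarantees that $\mathbf{Y}^*=\mathbf{Y}(t_0)$ has no zero row and is therefore nonzero.

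For (ii) $\Rightarrow$ (i), I would read the opinion update equation \eqref{opinion_dy} at time $t_0$: from $\mathbf{Y}(t_0)=\mathbf{Y}(t_0+1)$ one gets $\mathbf{Y}(t_0)=\frac{1}{N}\mathbf{X}(t_0+1)\mathbf{Y}(t_0)$, so, column by column, every nonzero column of $\mathbf{Y}(t_0)$ is an eigenvector of $\frac{1}{N}\mathbf{X}(t_0+1)$ associated with the eigenvalue $1$. By Lemma \ref{nozerorows} the matrix $\mathbf{Y}(t_0)$ has no zero row and is therefore nonzero, hence at least one such column exists; we conclude that $1\in\sigma\!\left(\frac{1}{N}\mathbf{X}(t_0+1)\right)$, which is (i) at time $t_0+1$.

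The main obstacle is the step (i) $\Rightarrow$ (iv), where the rank-one product structure of $\mathbf{Y}^*$ and the nontriviality of the coefficient vector $[a_1,\dots,a_m]$ have to be recovered from the purely spectral hypothesis on $\mathbf{X}(t_0)$; this is precisely the content distilled by Lemma \ref{eig1} and Proposition \ref{Xrango1}, with the no-zero-rows invariant of Lemma \ref{nozerorows} upgrading the conclusion from "equilibrium" to "nontrivial equilibrium". All the other implications are routine consequences of the dynamics and of the definition of equilibrium.
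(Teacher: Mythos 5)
Your proof is correct and follows essentially the same route as the paper: the substantive implication (i) $\Rightarrow$ (iv) rests on Lemma \ref{eig1} and Proposition \ref{Xrango1} exactly as in the paper's proof, and the remaining implications are the same routine observations, merely reorganized into a single cycle (the paper instead pairs (iii) $\Leftrightarrow$ (iv) via Proposition \ref{equilibria} and handles (ii) $\Rightarrow$ (iii) by the remark opening the section, while your direct eigenvector argument for (ii) $\Rightarrow$ (i) is the same computation that appears inside the proof of Proposition \ref{equilibria}). Your explicit use of Lemma \ref{nozerorows} to guarantee $\sum_i a_i^2\neq 0$ is a small but welcome addition that the paper leaves implicit.
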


\begin{proof} iv) $\Leftrightarrow$ iii) follows from Proposition \ref{equilibria}.
iii) $\Rightarrow$ ii) is obvious, while the converse has been  commented upon at the beginning of the section.\\
i) $\Rightarrow$ iv) follows from Proposition \ref{Xrango1}, while 
iv) $\Rightarrow$ i) is obvious.
\end{proof}
 \medskip

\section{Long term behavior}\label{convergence_inf}

  
  In the previous section, we have investigated what happens if either ${\bf Y}(t)$ 
  or ${\bf X}(t)$ 
  become constant starting at some time instant. In the former case the overall system  \eqref{appraisal_dy}- \eqref{opinion_dy}  reaches the equilibrium in a finite number of steps.
  In the latter  case a nontrivial equilibrium is reached if and only if ${\bf X}(t)$ at some point  becomes structurally balanced. Differently the opinion matrix asymptotically converges to zero.
We want to investigate now if a nontrivial equilibrium can be reached asymptotically, but not in a finite number of steps. \\
 An immediate consequence of the analysis of the previous section  is that if the sequence of appraisal matrices 
$\{{\bf X}(t)\}_{t\ge 1}$ does not converge in a finite number of steps then  $\frac{{\bf X}(t)}{N}$ is an asymptotically stable matrix for every $t\ge 1$. This means that
if we define the set
\be {\mathcal S}_{stable} := {\mathcal S}_1^N \setminus \{ {\mathbf X} \in {\mathcal S}_1^N: {\mathbf X} = {\mathbf p}{\mathbf p}^\top, \exists \, {\mathbf p} \in \{ -1,1 \}^N \},
\label{Sstable}
\ee
then ${\bf X}(t)\in {\mathcal S}_{stable}$   for every $t\ge 1$.
 \bigskip

\begin{proposition}[Zero vanishing condition]\label{lyapunov}
If for every $t \geq 0,$ ${\mathbf X}(t) \in {\mathcal S}_{stable}$,
 then $\lim_{t \rightarrow + \infty} {\mathbf Y}(t) = 0.$
\end{proposition}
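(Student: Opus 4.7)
The plan is to establish a uniform contraction rate $\gamma < 1$ for the spectral norm of $\tfrac{1}{N}{\bf X}(t+1)$ valid across every admissible value of ${\bf X}(t+1)$, and then to iterate the recursion ${\bf Y}(t+1) = \tfrac{1}{N}{\bf X}(t+1){\bf Y}(t)$ to obtain exponential decay of ${\bf Y}(t)$ to the zero matrix.

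The decisive observation is that $\mathcal{S}_1^N \subseteq \{-1,0,1\}^{N\times N}$ is a \emph{finite} set, hence so is $\mathcal{S}_{stable}$. For each ${\bf X}\in \mathcal{S}_{stable}$, the matrix $\tfrac{1}{N}{\bf X}$ is real symmetric, and the Gershgorin estimate recalled at the beginning of Section \ref{convergence_FT} forces $\sigma(\tfrac{1}{N}{\bf X}) \subseteq [-\tfrac{N-2}{N},1]$. By Lemma \ref{eig1} in the Appendix, $1\in \sigma(\tfrac{1}{N}{\bf X})$ if and only if ${\bf X} = {\bf p}{\bf p}^\top$ for some ${\bf p}\in \{-1,1\}^N$, a case explicitly excluded by the definition \eqref{Sstable} of $\mathcal{S}_{stable}$. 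Hence every eigenvalue of $\tfrac{1}{N}{\bf X}$ lies strictly inside $[-1,1)$, and since ${\bf X}$ is symmetric, $\|\tfrac{1}{N}{\bf X}\|_2 = \rho(\tfrac{1}{N}{\bf X}) < 1$. Setting $\gamma := \max_{{\bf X}\in \mathcal{S}_{stable}} \|\tfrac{1}{N}{\bf X}\|_2$, a maximum over a finite set of quantities each strictly less than one, we conclude $\gamma < 1$.

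With $\gamma$ in hand, the hypothesis that ${\bf X}(t)\in \mathcal{S}_{stable}$ for every $t\geq 1$ yields, by submultiplicativity of the Frobenius norm,
$$\|{\bf Y}(t+1)\|_F \;=\; \|\tfrac{1}{N}{\bf X}(t+1){\bf Y}(t)\|_F \;\le\; \|\tfrac{1}{N}{\bf X}(t+1)\|_2\,\|{\bf Y}(t)\|_F \;\le\; \gamma\,\|{\bf Y}(t)\|_F.$$
Iterating gives $\|{\bf Y}(t)\|_F \le \gamma^t \|{\bf Y}(0)\|_F \to 0$ as $t\to +\infty$, which is the desired claim.

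The only delicate point is the \emph{uniform} gap $\gamma < 1$; without the finiteness of $\mathcal{S}_1^N$ one could at best deduce that each individual spectral radius is less than one, which is insufficient to drive a product of switching matrices to zero. The finiteness of the set of admissible appraisal matrices is precisely what allows a single $\gamma$ to dominate the entire time-varying recursion, and this is the step I would flag as the heart of the argument; the remainder is a routine Gershgorin estimate combined with the characterization supplied by Lemma \ref{eig1}.
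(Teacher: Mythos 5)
Your proof is correct, but it takes a genuinely different route from the paper's. The paper argues via a Lyapunov-type function $V({\bf Y}(t)) = \sum_{j=1}^m \max_{i} |Y_{ij}(t)|$: using Lemma \ref{lemmaAB} it shows that $V$ cannot remain constant over two consecutive steps unless ${\bf X}(t_0+1)$ is structurally balanced, which would contradict ${\bf X}(t)\in{\mathcal S}_{stable}$; hence $V({\bf Y}(t+2))<V({\bf Y}(t))$ for all $t$, and convergence to zero is then obtained by invoking an external convergence theorem (Theorem 2.1 of the cited reference). Your argument instead exploits the finiteness of ${\mathcal S}_{stable}\subseteq\{-1,0,1\}^{N\times N}$ to extract a \emph{uniform} contraction factor $\gamma=\max_{{\bf X}\in{\mathcal S}_{stable}}\|\tfrac{1}{N}{\bf X}\|_2<1$ (each factor being symmetric with spectral radius strictly less than one, by the Gershgorin estimate combined with the contrapositive of Lemma \ref{eig1}), and then iterates $\|{\bf Y}(t)\|_F\le\gamma^{t}\|{\bf Y}(0)\|_F\to 0$. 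This is self-contained, yields an explicit exponential rate, and---as you rightly flag---the uniformity of $\gamma$ is exactly what a pointwise statement ``each spectral radius is less than one'' would fail to deliver for a product of switching matrices; it is the finiteness of the admissible set that closes this gap. What the paper's route buys is a finer structural picture of \emph{why} non-balance forces the column maxima to shrink (via the characterization in Lemma \ref{lemmaAB}), but at the cost of outsourcing the final passage from strict decrease of $V$ to $V\to 0$ to a cited result. Both arguments are valid; yours is the more elementary and quantitative of the two.
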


\begin{proof}
For every $j
\in [1,m]$, let us define $\mu_j(t):= \max_{i\in[1,N]} |{Y}_{ij}(t)|$, and let us introduce the (generalized) Lyapunov function for the system in equation \eqref{our_model}, $V : {\mathbb R}^{N \times m} \rightarrow {\mathbb R} $, defined as
$$V({\mathbf Y}(t)) := \sum_{j=1}^m \mu_j(t).$$ 
We notice that $V({\mathbf Y}) \geq 0$, $\forall {\mathbf Y} \in {\mathbb R}^{N \times m}$ and that $V({\mathbf Y})=0$ if and only if ${\mathbf Y}= 0$. 
Define $\Delta_2  V({\mathbf Y}(t)) :=V({\mathbf Y}(t+2))- V({\mathbf Y}(t))$. We want to prove that $\Delta_2 V({\mathbf Y}(t)) <0$, $\forall t \geq 0$. 

By Lemma \ref{lemmaAB} it immediately follows   that $\Delta_2 V({\mathbf Y}(t)) = \sum_{j=1}^m \mu_j(t+2)- \mu_j(t) \leq 0$. 
We show now that there is not a time instant $t_0 \geq 0$ such that $\Delta_2 V({\mathbf Y}(t))=0$. If this were the case, in fact, this would mean that  $\forall j \in [1,m]$, $\mu_j (t_0+2) = \mu_j(t_0)$ and therefore $\mu_j (t_0+2)= \mu_j (t_0+1) = \mu_j (t_0) =: \mu_j$. As a consequence of Lemma \ref{lemmaAB} we deduce that
\begin{itemize}
\item[a)] ${\mathbf Y}(t_0){\mathbf e}_j = {\mathbf p}_j(t_0)\mu_j$, $\exists\ {\mathbf p}_j (t_0) \in \{-1,1\}^N$,\\
\\
$\,\,\,\,\,\,\,\,\,\,\,\,\,\,{\mathbf Y}(t_0+1){\mathbf e}_j = {\mathbf p}_j(t_0+1)\mu_j$, $\exists\ {\mathbf p}_j (t_0+1) \in \{-1,1\}^N,$
\\
\item[b)] $\forall h \in [1,N]$, ${\mathbf e}_h^T {\mathbf X}(t_0+1) = p_h {\mathbf p}_j(t_0)^\top,$
\end{itemize}	
from which it follows that ${\mathbf X}(t_0+1) = {\mathbf p}_j(t_0) {\mathbf p}_j(t_0)^\top$ for every $j\in [1,m]$.
 But then ${\mathbf X}(t_0+1) = {\mathbf p}(t_0) {\mathbf p}(t_0)^\top$ with ${\mathbf p}(t_0) = \pm {\mathbf p}_j(t_0)$, $\forall j \in [1,m]$, that means that ${\mathbf X}(t_0+1)$ is structurally balanced and hence it does not belong to ${\mathcal S}_{stable}$, thus contradicting the hypotheses.
Consequently, it must be $\Delta_2 V ({\mathbf Y}(t))<0$, $\forall t \geq 0$. Finally, by defining $\Delta_1 V ({\mathbf Y}(t)) := V({\mathbf Y}(t+1))-V({\mathbf Y}(t)) $ we get that
$$\Delta_2 V ({\mathbf Y}(t))+\Delta_1 V ({\mathbf Y}(t)) < 0, \,\,\, \forall t \geq 0,$$
so the thesis follows as a direct consequence of Theorem 2.1 in \cite{Parrilo2008}.
\end{proof}
\bigskip

 Summarizing, Theorem \ref{finite_time_balance} and Proposition \ref{lyapunov} show that either there exists a time instant $t_0$ such that $\forall \, t \geq t_0$, ${\mathbf X}(t) = {\mathbf p}{\mathbf p}^\top$ and consequently ${\mathbf Y}(t) ={\mathbf p}[a_1, a_2, \dots, a_m] $, $a_i \in \mathbb{R}$, $\sum_i a_i^2 \neq 0$, otherwise, if a time instant $t$ such that ${\mathbf X}(t)$ reaches the structural balance does not exist, then ${\mathbf Y}(t)$ converges to zero as time goes to infinity.

\section{Simulations}
In this section we show the outcome of Monte Carlo simulations in order to validate the convergence properties of the model. Figure \ref{simulations} shows how the average  number of iterations needed  in order to reach a structural balanced configuration   over the total number of $30000$ simulations  varies as  a function of the number of topics $m \in [1,10]$, for networks involving $N = 9, 20, 100$ agents. Simulations are based on initial conditions ${\bf Y}(0)$ with entries independently drawn from a Gaussian random variable with zero mean and standard deviation $\sigma  =10$, namely $Y_{ij}(0)\sim \mathcal{N}(0,100)$. It turns out that, in accordance with the Chernoff bound, by running $30000$ simulations, the estimated probability $\hat{p}$ to reach a structurally balanced configuration is equal to $1$ with accuracy $\epsilon =0.01$ and confidence level $1-\delta = 0.99$,  namely $P(\lvert \hat{p}-p\lvert \leq \epsilon)\geq 1-\delta$, for the case of $N=20,100$ agents, regardless of the number of topics taken into account while $\hat{p}$ is  greater than or equal to $0.98$   for all $m \in[1,10]$, for $N=9$, with the same accuracy and confidence interval. 
\begin{center}
\begin{figure}[h!]
\center
\includegraphics[scale=0.25]{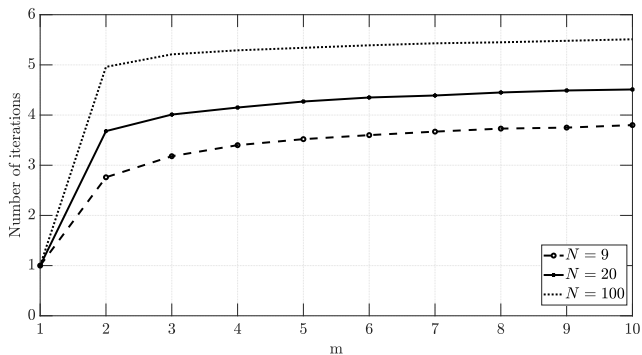}
\caption{Average number of iterations, over 30000 simulations, needed in order to reach a structural balance configurations for the cases $N = 9,20,100$ and $m \in[1,10]$.}\label{simulations}
\end{figure}
\end{center}

\section{Conclusion} \label{conclusion}

In this paper we have proposed a modified version of Liu et al. model \cite{MeiDorfler}
 for the interplay between homophily-based appraisal dynamics and influence-based opinion dynamics.
 In order to update the agents' opinions on a numbers of issues, only the signs (and not the values) of  the agents' mutual appraisals are used. This simplified model
 retains all the main characteristics of the original model, is simpler to analyse and implement, leads to the same kind of  nontrivial structurally balanced equilibria 
 as   in  \cite{MeiDorfler}, but rules out  nontrivial equilibria that correspond to disconnected socially balanced networks.
 Furthermore, nontrivial equlibria can always be reached in a finite number of steps,  while the case when all opinions and appraisals converge to zero corresponds to sets of initial conditions of zero measure.

\section*{Appendix}

\begin{lemma}[Rank-one matrices with special structures]\label{eig1}
Given a matrix ${\bf M}\in S_1^N$, if $1\in \sigma \left(
\frac{1}{N} {\bf M}\right)$, then
${\bf M} = {\bf p}{\bf p}^\top$ for some ${\bf p}\in \{-1,1\}^N$, and hence ${\bf M}$ has no zero entries and 
$\sigma({\bf M}) = (0, \dots, 0,1)$.
\end{lemma}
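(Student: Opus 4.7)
The plan is to translate the spectral hypothesis into an eigenvector equation and then extract the structure of $\mathbf{M}$ from a sharp chain of triangle inequalities, exploiting the two constraints encoded in $\mathcal{S}_1^N$: symmetry with $|M_{ij}|\le 1$ and unit diagonal.

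First I would pick a nonzero $\mathbf{v}\in\mathbb{R}^N$ satisfying $\mathbf{M}\mathbf{v}=N\mathbf{v}$ and set $\mu:=\|\mathbf{v}\|_\infty>0$. Choosing any index $i$ with $|v_i|=\mu$, I would write the chain
\[
N\mu=|Nv_i|=\Bigl|\sum_{j=1}^N M_{ij}v_j\Bigr|\;\le\;\sum_{j=1}^N|M_{ij}||v_j|\;\le\;\sum_{j=1}^N|v_j|\;\le\;N\mu,
\]
using in succession the triangle inequality, the bound $|M_{ij}|\le 1$ coming from $\mathbf{M}\in\mathcal{S}_1^N$, and $|v_j|\le\mu$. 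Every inequality must therefore be an equality.

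Next I would read off the three consequences in a specific order. Equality in the last step forces $|v_j|=\mu$ for all $j$, so \emph{every} index is a maximizer and in particular $v_j\ne 0$ for all $j$. Equality in the middle step then upgrades to $|M_{ij}|=1$ for all $j$; but since the argument works for any maximizer $i$ and every index is one, this gives $\mathbf{M}\in\{-1,1\}^{N\times N}$. Finally, equality in the triangle inequality means the summands $M_{ij}v_j$ all share a common sign, and the diagonal constraint $M_{ii}=1$ pins that common sign down to $\mathrm{sgn}(v_i)$. Setting $\mathbf{p}:=\mathrm{sgn}(\mathbf{v})\in\{-1,1\}^N$ and using $v_j=p_j\mu$, an entrywise computation gives $M_{ij}=p_ip_j$, that is, $\mathbf{M}=\mathbf{p}\mathbf{p}^\top$. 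The remaining assertions are immediate: $\mathbf{M}$ has no zero entries because $\mathbf{p}\in\{-1,1\}^N$, and the rank-one symmetric structure together with $\mathrm{tr}(\mathbf{M})=N$ forces $\sigma(\mathbf{M})$ to consist of $N-1$ zeros and a single $N$, so that $\sigma(\tfrac{1}{N}\mathbf{M})=(0,\dots,0,1)$.

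The step I expect to be most delicate is keeping the three tightenings of the inequality chain cleanly separated and, in particular, noticing that the \emph{third} equality (forcing $|v_j|=\mu$ everywhere) is what turns a statement about a single row of $\mathbf{M}$ into a statement about every row; without this observation one only concludes $|M_{ij}|=1$ for indices in the support of a single maximizing row and cannot yet assemble the rank-one factorization.
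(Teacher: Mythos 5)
Your proof is correct and follows essentially the same route as the paper's: pick an eigenvector of $\mathbf{M}$ for the eigenvalue $N$, examine the eigenvalue equation at a max-modulus index, force equality throughout (so all $|v_j|$ coincide, all $|M_{ij}|=1$, and the signs align via $M_{ii}=1$), and then use the fact that every index is a maximizer to propagate the conclusion to every row and assemble $\mathbf{M}=\mathbf{p}\mathbf{p}^\top$. Your explicit triangle-inequality chain is just a slightly more careful writeup of the paper's ``holds if and only if'' step, and your closing trace argument correctly identifies the nonzero eigenvalue of $\mathbf{M}$ as $N$ (so the spectrum listed in the lemma statement is really that of $\tfrac{1}{N}\mathbf{M}$).
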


\begin{proof}
Let ${\bf v} := \begin{bmatrix} v_1 & v_2 & \dots & v_N\end{bmatrix}^\top\in {\mathbb R}^N, {\bf v}\ne 0,$ be an eigenvector of $\frac{1}{N}{\mathbf M}$ corresponding to the unitary eigenvalue, or equivalently of ${\bf M}$ corresponding to $N$. Then
${\bf M}  {\bf v} = N  {\bf v}.$
Let $h := {\rm argmax}_{i\in [1,N]} |v_i|.$ Then
condition
$$N v_h = \sum_{i=1}^N { M}_{hi} v_i = v_h + \sum_{i=1\atop i\ne h}^N { M}_{hi} v_i$$
  holds if and only if (a) $|v_i|=|v_h|$ for every $i\in [1,N]$;
  (b) ${ M}_{hi}  \ne 0$ for every $i\in [1,N]$, and ${\rm sgn}({M}_{hi}){\rm sgn}(v_i) = {\rm sgn}(v_h)$. \\
  This implies that
  ${\bf v} = {\bf p} m$ for some ${\bf p}\in \{-1,1\}^N$ and some $m>0$ and ${\bf e}_h^\top {\bf M} = {\rm sgn}(v_h) {\bf p}^\top 
  =    p_h {\bf p}^\top$.
  \\
  On the other hand, since condition (a) holds, this means that every index $j\in [1,N]$ is ${\rm argmax}_{i\in [1,N]} |v_i|,$
  and hence all the rows of ${\bf M}$ satisfy  ${\bf e}_i^\top {\bf M} =  p_i {\bf p}^\top$. This implies that
  ${\bf M} = {\bf p}{\bf p}^\top$, and the rest immediately follows.
\end{proof}
\bigskip

\bibliographystyle{plain} 

 \bibliography{Refer168}

\end{document}